\newtheorem{lemma}{Lemma}
\newtheorem{proposition}{Proposition}
\newtheorem{remark}{Remark}
\begin{document}

\title{
Wireless-Powered Cooperative Communications via a Hybrid Relay
\thanks{The work was supported by the Australian Research Council (ARC) under Grants DP120100190 and
FT120100487, International Postgraduate Research Scholarship (IPRS), Australian Postgraduate Award (APA), and Norman I Price Supplementary scholarship. The work of X. Zhou was supported under Australian Research Council's Discovery Projects funding scheme (project number DP140101133).}
}


\author{\IEEEauthorblockN {He (Henry) Chen\IEEEauthorrefmark{1}, Xiangyun Zhou\IEEEauthorrefmark{2}, Yonghui Li\IEEEauthorrefmark{1}, Peng Wang\IEEEauthorrefmark{1}, Branka Vucetic\IEEEauthorrefmark{1}
}
\IEEEauthorblockA{\IEEEauthorrefmark{1}The University of Sydney, Sydney, Australia, Email: \{he.chen,~yonghui.li,~peng.wang,~branka.vucetic\}@sydney.edu.au}
\IEEEauthorblockA{\IEEEauthorrefmark{2}The Australian National University, Canberra, Australia, Email: xiangyun.zhou@anu.edu.au}
}


\maketitle

\begin{abstract}
In this paper, we consider a wireless-powered cooperative communication network, which consists of a hybrid access-point (AP), a hybrid relay, and an information source. In contrast to the conventional cooperative networks, the source in the considered network is assumed to have no embedded energy supply. Thus, it first needs to harvest energy from the signals broadcast by the AP and/or relay, which have constant power supply, in the downlink (DL) before transmitting the information to the AP in the uplink (UL). The hybrid relay can not only help to forward information in the UL but also charge the source with wireless energy transfer in the DL. Considering different possible operations of the hybrid relay, we propose two cooperative protocols for the considered network. We jointly optimize the time and power allocation for DL energy transfer and UL information transmission to maximize the system throughput of the proposed protocols. Numerical results are presented to compare the performance of the proposed protocols and illustrate the impacts of system parameters.
\end{abstract}

\begin{IEEEkeywords}
Wireless energy transfer, RF energy harvesting, cooperative communications.
\end{IEEEkeywords}

\IEEEpeerreviewmaketitle

\section{Introduction}

Cooperative communication technique has attracted enormous interests over the past few years due to its various advantages, such as increasing system capacity, coverage and energy efficiency \cite{Laneman_TIT_2004,Yonghui_book_2010}. By allowing nodes to share their antennas and transmit cooperatively as a virtual multiple-input multiple-output (MIMO) system, the spatial diversity can be achieved without the need of multiple antennas at each node. In practice, supportive relay stations have been deployed to improve the performance of cellular networks, WLANs and wireless sensor networks \cite{Yonghui_book_2010}.

On the other hand, radio frequency (RF) energy harvesting technique has recently emerged as a viable solution to prolong the lifetime of energy constrained wireless networks due to some significant advances of wireless power technologies \cite{Shinohara_mag_2011_Power}.
As such, a new type of wireless networks, termed wireless-powered communication network (WPCN), has attracted more and more attentions recently. In WPCNs, wireless terminals are powered only via WET and transmit their information using the harvested energy \cite{Ju_TWC_2014}. As shown in Fig. \ref{fig:system_model}, in this work we consider a simple WPCN consists of one hybrid AP, one hybrid relay, and one source node that wants to transmit its information to the AP. The AP and relay are connected to constant power supply, while the source is assumed to have no embedded energy source. But it is equipped with a rechargeable battery and thus can harvest and store the wireless energy broadcast by the AP and/or relay. In such a network, the relay plays two equally important roles. Besides the uplink (UL) information forwarding as the conventional relay, the hybrid relay also helps the AP to charge the source via WET in the downlink (DL). This is in contract to the existing papers that considered WET in cooperative networks (e.g., \cite{Krikidis_CL_2012_RF,Zhou_TWC_2013,Chen_arXiv_2014_harvest,Chen_arXiv_2014_Distributed}), where the relay was assumed to have no embedded power supply and need to harvest energy from other nodes.

\begin{figure}
\centering \scalebox{0.6}{\includegraphics{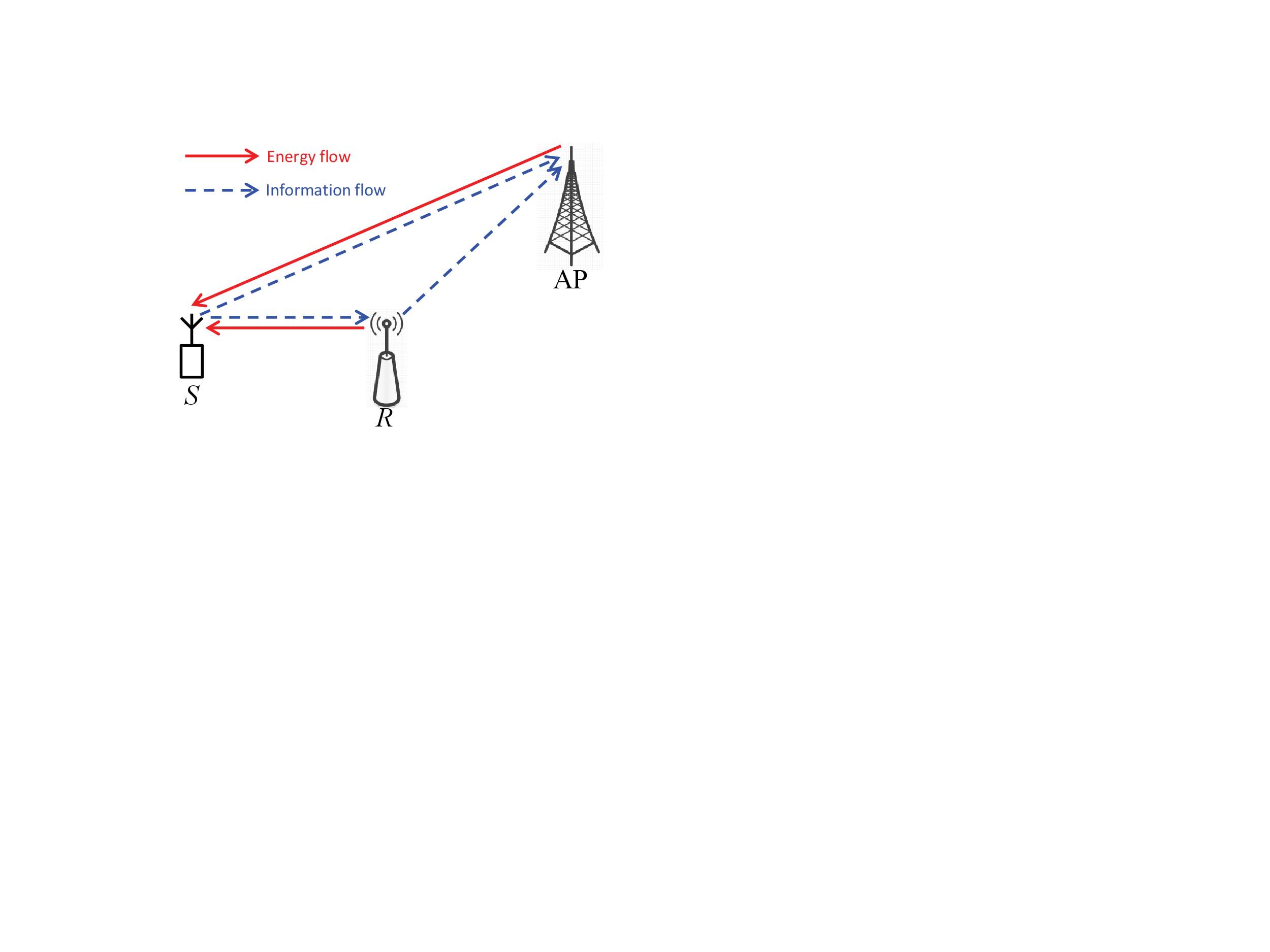}}
\caption{System model for wireless-powered cooperative communications via a hybrid relay. \label{fig:system_model}}
\end{figure}

A natural question that arises in the considered network is ``\emph{What is the optimal way to utilize the constant-powered relay for energy charging and/or information forwarding?}" This is actually an open and non-trivial question to answer. The reason is that the designs of the relay's power allocation for energy charging and/or information forwarding, the time allocation for DL energy transfer and UL information transmission, and the AP transmit power, are tangled together.


To tackle this open problem, in this paper we develop two cooperative protocols with different relay operations for the considered WPCN. Furthermore, we formulate optimization problems to maximize the system throughput by jointly designing the time allocation and power allocation for the two proposed protocols, respectively. The optimal solutions are subsequently derived and compared by simulations.  Numerical results show that the two proposed protocols can outperform each other in different network scenarios, which provides useful insights into the design of the hybrid relay in WPCNs.

\section{System Model and Protocols Description}
As shown in Fig. \ref{fig:system_model}, this paper considers a wireless-powered cooperative communication network. It is assumed that all the nodes are equipped with single antenna and work in the half-duplex mode. The source is assumed to have no embedded energy supply and thus needs to first harvest energy from the signal broadcasted by the hybrid AP and/or the relay in the DL, which can be stored in a rechargeable battery and then used for the UL information transmission.

In the sequel, we use subscript $A$ for AP, $S$ for source, and $R$ for relay. We use ${f_{XY}}
$ to denote the channel coefficient from $X$ to $Y$ with $X,Y \in \left\{ {A,S,R} \right\}$. 
The channel power gain is thus given by ${h_{XY}} = {\left| {{{f}_{XY}}} \right|^2}$. 
In addition, it is assumed that all channels in both DL and UL experience independent slow and frequency flat fading, where the channel gains remain constant during each transmission block (denoted by $T$) but change independently from one block to another.
\begin{figure}
\centering
 \subfigure[E-C Protocol]
  {\scalebox{0.22}{\includegraphics {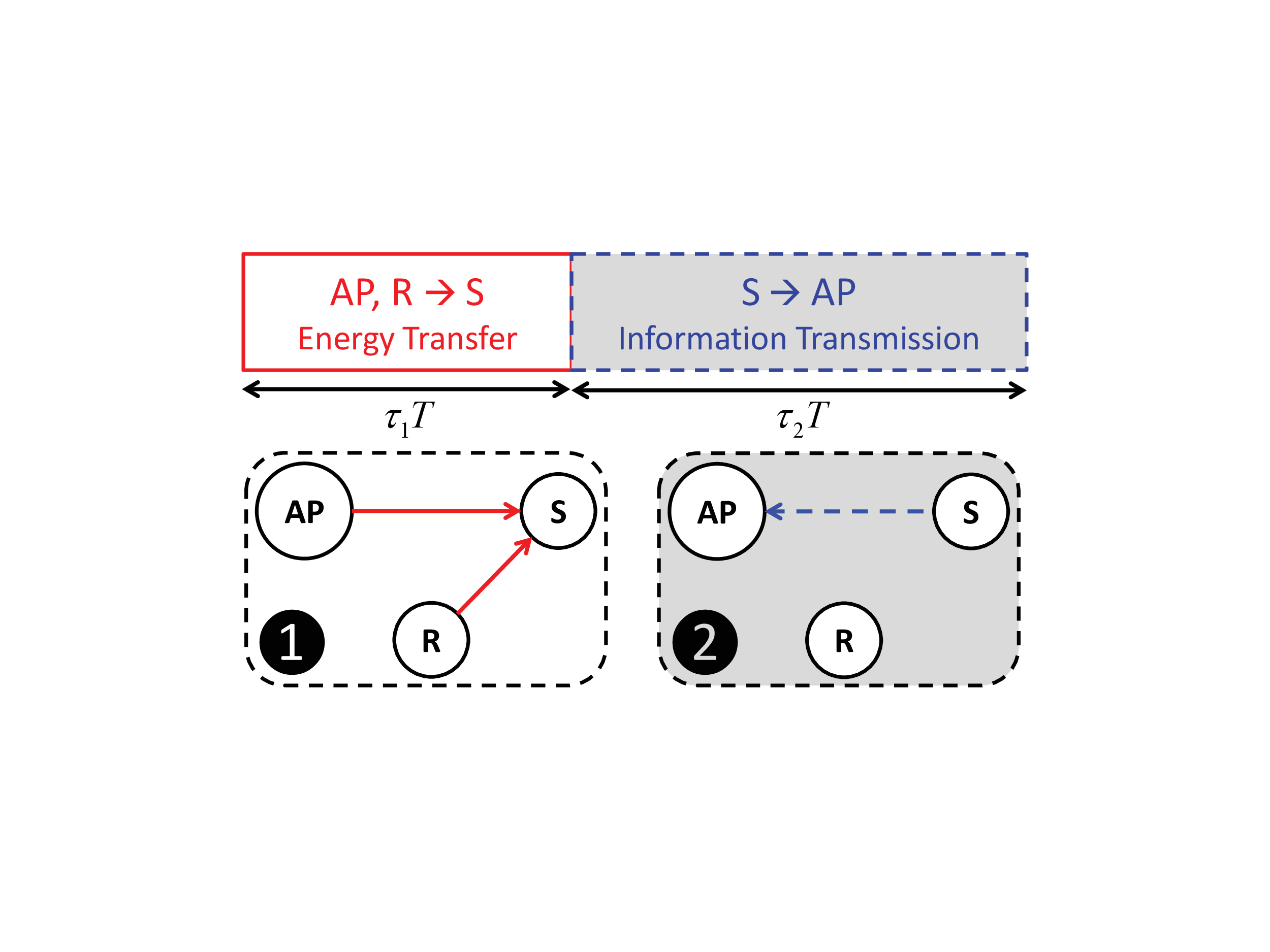}
  }}
\hfil
 \subfigure[D-C Protocol]
  {\scalebox{0.22}{\includegraphics {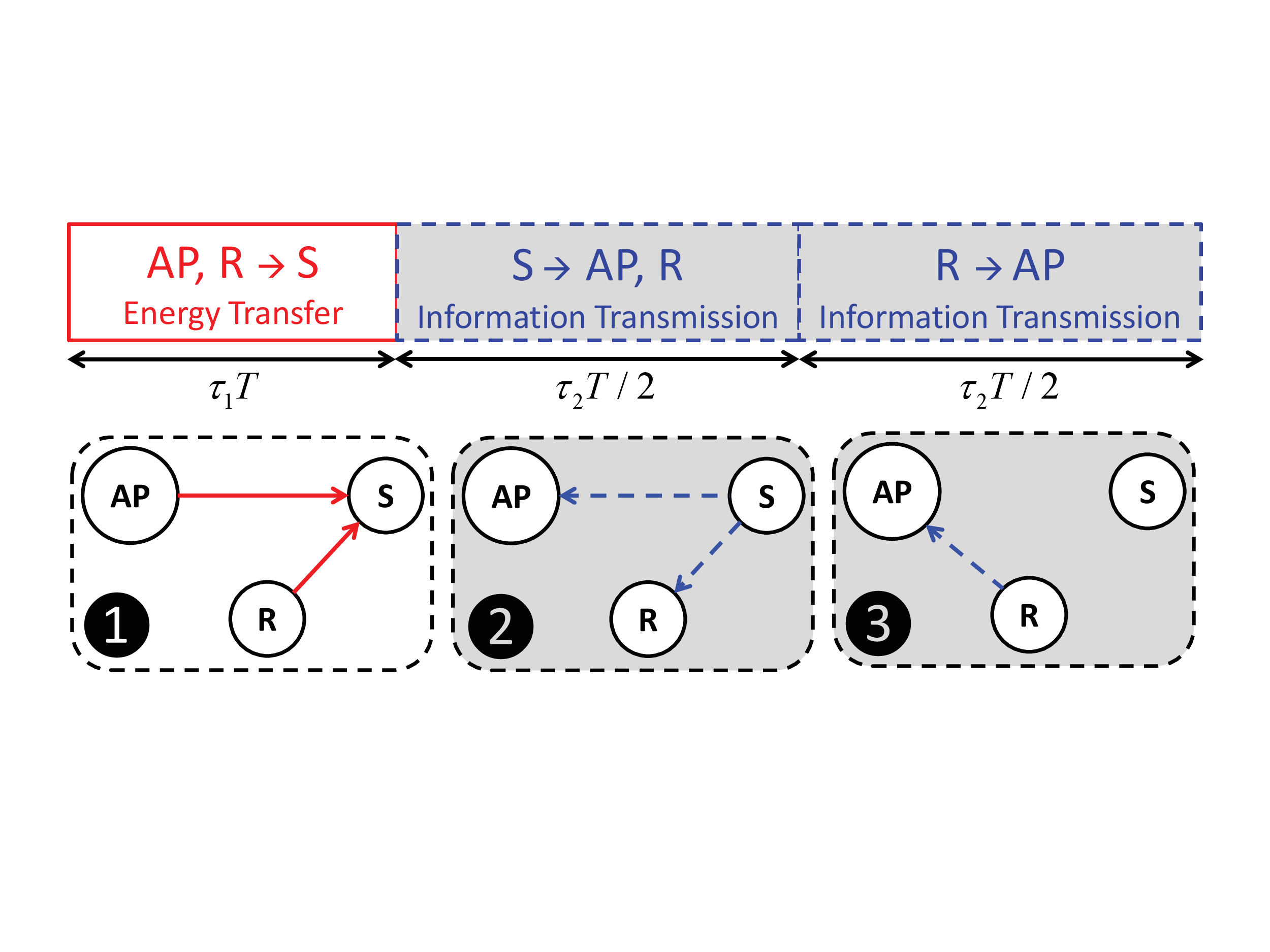}
  }}
\caption{The block diagrams for the two proposed cooperative protocols.}
\label{fig:protocols}
\end{figure}

In this paper, we develop two cooperative protocols for the considered network, referred to as \emph{energy cooperation (E-C)}\footnote{It is worth mentioning that the term ``energy cooperation" was first used in \cite{Gurakan_Tcom_2013_Energy}, where energy cooperation is used to term the following protocol: all nodes harvest some amount of energy from nature, and source node sends some energy to the relay, which in return forwards source's data via user cooperation to the destination.} and \emph{dual cooperation (D-C)}, which are different in relay operations during each transmission block, as shown in Fig. \ref{fig:protocols}. In the E-C protocol, the relay simply cooperates with the AP for DL energy transfer. In the D-C protocol, the relay first cooperates with the AP for energy transfer in the DL and then cooperates with the source for information transmission in the UL. Thus, we name this protocol as D-C (i.e., both energy and information cooperation) protocol.
In the subsequent subsections, we describe the proposed protocols and analyze their end-to-end signal-to-noise ratios (SNRs).
\subsection{E-C Protocol}
In the E-C protocol shown in Fig. \ref{fig:protocols} (a), the first $\tau_1 T$ amount of time with $0\le\tau_1 \le 1$ is assigned to the DL energy transfer, during which the AP and the relay transmit concurrently to charge the source with WET. In the following $\tau_2$ fraction of the block, the source will use the harvested energy to send its information to the AP, while the relay remains idle.

Let $P_A$ and $P_R$ denote the transmit power of the AP and relay, respectively. Here, we consider that the AP and relay have both peak and average power constraints. Mathematically, we have
\begin{equation}\label{eq:power_constraint_AP}
{P_A} \le P_A^{\max },~{\tau _1}{P_A} \le {P_A^{\rm{avg}}},
\end{equation}
\begin{equation}\label{eq:power_constraint_relay}
{P_R} \le P_R^{\max },~{\tau _1}{P_R} \le {P_R^{\rm{avg}}},
\end{equation}
where $P_X^{\max }$ and $P_X^{\rm{avg}}$ are the peak power and average power of the node $X$, $X\in \{A,R\}$. For simplicity, we consider
\begin{equation}\label{eq:def_mu}
{{{P_A^{\rm{avg}}}}}/{{P_A^{\max }}} = {{P_R^{\rm{avg}}}}/{{P_R^{\max }}} = \mu.
\end{equation}
In general, the average power should be no larger than the peak power. Hence, we have $\mu \le 1$. Note that the analytical method proposed in this paper can be readily extended to the case ${{{P_A^{\rm{avg}}}}}/{{P_A^{\max }}} \neq {{P_R^{\rm{avg}}}}/{{P_R^{\max }}}$.

Besides, $x_A^E$ and $x_R^E$ are used to denote the randomly generated energy signals with unit average energy (i.e., $\mathbb{E}\left\{ {{{\left| {{x_A^E}} \right|}^2}} \right\}= \mathbb{E}\left\{ {{{\left| {{x_R^E}} \right|}^2}} \right\} = 1$) transmitted by the AP and the relay. Then, the received signal at the source during the DL phase, denoted by $y_S$, can be expressed as
\begin{equation}\label{}
{y_S} = \sqrt {{P_A}} {f_{AS}}{x_A^E} + \sqrt {{P_R}} {f_{RS}}{x_R^E} + {n_S}.
\end{equation}
where ${n_S}$ is the additive white Gaussian noise (AWGN) at the source. The energy harvesting receiver at the source rectifies the received RF signal directly and obtains the direct current to charge up its batteries \cite{Zhou_Tcom_2013}. Moreover, we consider that the noise power is too small and below the sensitivity of energy harvesting device. Thus, the amount of energy harvested by the source in the E-C protocol is given by
\begin{equation}\label{eq:harvested_energy_source}
{E_S} = \eta \tau_1 T \left({P_A}{h_{AS}} + {P_R}{h_{RS}}\right),
\end{equation}
where $0<\eta<1$ is the energy harvesting efficiency. It is worth emphasizing that phase synchronization between the AP and relay is not required for the WET in the DL since they transmit independent energy signals. For convenience but without loss of generality, we consider a normalized unit block time (i.e., $T=1$) hereafter.

After the source replenishes its energy during the DL phase, it transmits its information to the AP by itself in the subsequent UL phase. It is assumed that the source exhausts the harvested energy for the information transmission. The transmission power of the source during the UL phase in this protocol is thus given by
\begin{equation}\label{eq:source_UL_power}
{P_S^{E-C}} = {{{E_S^{E-C}}}}/{{ {\tau_2} }}.
\end{equation}

Therefore, the end-to-end SNR at the hybrid AP in the E-C protocol can be expressed as
\begin{equation}\label{eq:SNR_E-C}
{\gamma _{E-C}} = \frac{{{P_S^{E-C}}{{ {{h_{SA}}} }}}}{{{N_0}}} = \frac{{\eta \tau_1 \left( {P_A{h_{AS}} + P_R{h_{RS}}} \right){h_{SA}}}}{{\tau_2{N_0}}},
\end{equation}
where $N_0$ denotes the power of the noise suffered by all receivers in the considered network.

\subsection{D-C Protocol}
The D-C protocol is shown in Fig. \ref{fig:protocols} (b). Analogous to the E-C protocol, the first $\tau_1 T$ amount of each transmission block is allocated for the DL energy transfer from the AP and relay to the source. The subsequent $\tau_2$ fraction of the block is further divided into two time slots with equal length of $\tau_2 T/2$ for cooperative information transmission in the UL. During the first time slot of the UL phase, the source uses the harvested energy to transmit data information to the AP, which can also be overheard by the relay due to the broadcasting feature of wireless communication. In the second time slot of the UL phase, the relay will help to forward the source's information using the amplify-and-forward (AF) relaying protocol due to its lower complexity\footnote{For the purpose of exposition, the possibility of the source harvesting energy during the relay's transmission is not taken into account in this paper. This is regarded as our future work.} \cite{Laneman_TIT_2004}.  At the end of each block, the AP combines the signals received in the first and second time slots using maximum ratio combining (MRC) technique and performs coherent detection.

Let $P_R^D$ and $P_R^U$ denote the transmit power of the relay during the DL and UL phases, respectively. Then, the peak and average power constraints for the relay in (\ref{eq:power_constraint_relay}) can be re-written as
\begin{subequations}\label{eq:power_constranit_Relay_D-C}
\begin{align}
&P_R^D \le P_R^{\max },~P_R^U \le P_R^{\max } ,\\
&{\tau _1}P_R^D + {{{\tau _2}}}P_R^U/{2} \le {P_R^{\rm{avg}}}.
\end{align}
\end{subequations}

Following the similar analysis for E-C protocol, we can readily obtain that the received SNR at the AP from the source in this protocol can be expressed as
\begin{equation}\label{}
\gamma _{SA} =  {{2\eta \tau_1 \left( {P_A{h_{AS}} + P_R^D{h_{RS}}} \right){h_{SA}}}}/\left({{\tau_2{N_0}}}\right).
\end{equation}

The received SNR at the hybrid AP from the link $S$-$R$-$A$ can thus be written as \cite{Ikki_CL_2007}
\begin{equation}\label{eq:exact_gamma_SRA}
{\gamma _{SRA}} = \frac{{{\gamma _{SR}}{\gamma _{RA}}}}{{{\gamma _{SR}} + {\gamma _{RA}} + 1}},
\end{equation}
where 
\begin{equation}\label{}
\gamma _{SR} = {{2\eta \tau_1 \left( {P_A{h_{AS}} + P_R^D{h_{RS}}} \right){h_{SR}}}}/\left({{\tau_2{N_0}}}\right),
\end{equation}
\begin{equation}\label{}
 \gamma_{RA} = {{P_R^U{h_{RA}}}}/{{{N_0}}}.
\end{equation}

Since the MRC technique is adopted at the AP receiver, the end-to-end SNR of the D-C protocol is given~by
\begin{equation}\label{eq:SNR_e2e_D-C}
{\gamma _{D-C}} = {\gamma_{SA}} + {\gamma_{SRA}}.
\end{equation}

It is worth mentioning that there exists another possible scheduling of the hybrid relay. That is, the relay keeps silent during the DL phase and only cooperates with the source for UL information transmission. However, this protocol can be regarded as a special of the D-C protocol by setting $P_R^D = 0$, which is thus omitted.

\section{Throughput Maximization for the Proposed Protocols}
In this section, we design the joint time and power allocation for the two proposed protocols to maximize their corresponding throughput. For the purpose of exposition, full channel state information (CSI) is assumed to be known.

\subsection{Throughput Maximization for E-C Protocol}
The throughput (bps/Hz) of E-C protocol can be expressed as 
\begin{equation}\label{eq:throughput_E-C}
{{\mathcal T}_{E - C}} =  \tau_2 {\log _2}\left( {1 + {\gamma _{E - C}} } \right),
\end{equation}
where ${\gamma_{E - C}}$ is given in (\ref{eq:SNR_E-C}).

To maximize the throughput of this protocol, we formulate the following optimization problem:
\begin{equation}\label{}
\left({\rm{P3.1}}\right):\begin{array}{l}
\mathop {\max }\limits_{{P_A},{P_{R}},\tau_1,\tau_2} {{\mathcal T}_{E - C}}\\
\;\;\;\;{\rm{s}}{\rm{.t}}{\rm{.}}\;(\ref{eq:power_constraint_AP}),\;(\ref{eq:power_constraint_relay}),\;
\tau_1 + \tau_2 \le 1,\\
\;\;\;\;\;\;\;\;\;\;P_A,P_R,\tau_1,\tau_2 \ge 0.
\end{array}
\end{equation}

Unfortunately, it is easy to check that the problem (P3.1) is not a convex one. To tackle the non-convexity, we introduce two new variable $E_A = \tau_1 P_A$ and $E_R = \tau_1 P_R$. Based on this variable substitution, the throughput of the E-C protocol can be rewritten as
\begin{equation}\label{eq:throughput_E-C_rewritten}
{{\mathcal T}_{E - C}^\prime} =  \tau_2 {\log _2}\left( {1 + \frac{{\eta  \left( {E_A{h_{AS}} + E_R{h_{RS}}} \right){h_{SA}}}}{{\tau_2{N_0}}}} \right).
\end{equation}
Accordingly, the problem (P3.1) can be reformulated as
\begin{equation}\label{}
\left({\rm{P3.2}}\right):\begin{array}{l}
 \mathop {\max }\limits_{{E_A},{E_{R}},\tau_1,\tau_2} {{\mathcal T}_{E - C}^\prime} \\
\;\;\;\;{\rm{s}}{\rm{.t}}{\rm{.}}\;E_A \le \tau_1 P_A^{\max},\;E_A \le {P_A^{\rm{avg}}}, \\
\;\;\;\;\;\;\;\;\;\;E_R \le \tau_1 P_R^{\max},\;E_R \le {P_R^{\rm{avg}}}, \\
\;\;\;\;\;\;\;\;\;\;\tau_1 +\ \tau_2 \le 1, \\
\;\;\;\;\;\;\;\;\;\;E_A,E_R,\tau_1,\tau_2 \ge 0.
 \end{array}
\end{equation}

To solve the problem (P3.2), we first consider its simplified problem by removing the constraints
\begin{equation}\label{eq:inequality_condition_E_A_E_R}
E_A \le {P_A^{\rm{avg}}},\;E_R \le {P_R^{\rm{avg}}}.
\end{equation}
In this case, we have the following problem:
\begin{equation}\label{}
\left({\rm{P3.2.1}}\right):\begin{array}{l}
 \mathop {\max }\limits_{{E_A},{E_{R}},\tau_1,\tau_2} {{\mathcal T}_{E - C}^\prime} \\
\;\;\;\;{\rm{s}}{\rm{.t}}{\rm{.}}\;{E_A} \le \tau_1 P_A^{\max},\; {E_R} \le \tau_1 P_R^{\max},\\
\;\;\;\;\;\;\;\;\;\;\tau_1 +\ \tau_2 \le 1, \\
\;\;\;\;\;\;\;\;\;\;E_A,E_R,\tau_1,\tau_2 \ge 0.
 \end{array}
\end{equation}
It is straightforward to see that the throughput ${{\mathcal T}_{E - C}^\prime}$ in (\ref{eq:throughput_E-C_rewritten}) is monotonically increasing with $E_A$ and $E_R$ for given values of $\tau_1$ and $\tau_2$. Then, we can deduce that the optimal $E_A$ and $E_R$ should satisfy
\begin{equation}\label{eq:optimal_condition_E_A_E_R}
{E_A} = \tau_1 P_A^{\max},\;{E_R} = \tau_1 P_R^{\max}.
\end{equation}
Accordingly, we can further simplify the problem (P3.2.1) to the following one regarding time allocation only:
\begin{equation}\label{}
\left({\rm{P3.2.2}}\right):\begin{array}{l}
 \mathop {\max }\limits_{\tau_1,\tau_2} {{\mathcal T}_{E - C}^\prime} \\
\;{\rm{s}}{\rm{.t}}{\rm{.}}\;\tau_1 +\ \tau_2 \le 1,\;\tau_1,\tau_2 \ge 0.
 \end{array}
\end{equation}

The above problem (P3.2.2) can be regarded as a special case of the one addressed in \cite{Ju_TWC_2014}. Following the analyses in \cite{Ju_TWC_2014}, we can steadily obtain the optimal solution of the problem ({P3.2.1}) given by
\begin{subequations}\label{eq:optimal_solution_P3.2.1}
\begin{align}
&\tau_1^\bullet = \frac{{{z^\bullet} - 1}}{{A + {z^\bullet} - 1}},\;
\tau_2^\bullet = 1 - \tau_1^\bullet,\\
&E_A^\bullet = \tau_1^\bullet P_A^{\max},\;
E_R^\bullet = \tau_1^\bullet P_R^{\max},
\end{align}
\end{subequations}
where $z^\bullet$ is the unique solution of the equation
\begin{equation}\label{}
z\ln z - z + 1 = \frac{{\eta  \left( {P_A^{\max}{h_{AS}} + P_R^{\max}{h_{RS}}} \right){h_{SA}}}}{{{N_0}}}.
\end{equation}

Based on the above analyses, we can obtain the following proposition in terms of the optimal solution to the original problem (P3.2):
\begin{proposition}\label{prop:optimal_solution_3.1}
The optimal solution to the problem (P3.2), denoted by $\left(E_A^*,E_R^*, \tau_1^*,\tau_2^*\right)$, is given by
\begin{equation}\label{eq:optimal_solution_P3.2_E}
E_X^* = \left\{ \begin{array}{l}
\tau_1^\bullet P_X^{\max },\;{\rm{ if }}\;{\tau _1^\bullet} \le \mu, \\
{P_X^{\rm{avg}}},\;\;\;\;\;\;\;\;{\rm{if }}\;{\tau _1^\bullet} > \mu,  \\
 \end{array} \right.\;X \in \left\{ {A,R} \right\},
\end{equation}
\begin{equation}\label{eq:optimal_solution_P3.2_tau}
 \tau _1^* = \left\{ \begin{array}{l}
 {\tau _1^\bullet},\;{\rm{ if }}\;{\tau _1^\bullet} \le \mu,  \\
 \mu ,\;\;\;{\rm{ if }}\;{\tau _1^\bullet} > \mu,  \\
 \end{array} \right.,\; \tau _2^* = 1 - \tau _1^*,
\end{equation}
where $\mu$ is defined in (\ref{eq:def_mu}).
\end{proposition}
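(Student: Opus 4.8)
The plan is to obtain Proposition~\ref{prop:optimal_solution_3.1} by regarding (P3.2.1) as a relaxation of (P3.2): (P3.2.1) is precisely (P3.2) with the two average-power constraints (\ref{eq:inequality_condition_E_A_E_R}) removed, so the optimal value of (P3.2.1) upper-bounds that of (P3.2), and the optimizer $(E_A^\bullet,E_R^\bullet,\tau_1^\bullet,\tau_2^\bullet)$ in (\ref{eq:optimal_solution_P3.2.1}) solves (P3.2) as soon as it is feasible for (P3.2). From (\ref{eq:optimal_solution_P3.2.1}) and (\ref{eq:def_mu}), $E_X^\bullet=\tau_1^\bullet P_X^{\max}\le\mu P_X^{\max}=P_X^{\rm{avg}}$ holds for both $X\in\{A,R\}$ if and only if $\tau_1^\bullet\le\mu$. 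Hence in the case $\tau_1^\bullet\le\mu$ the point $(E_A^\bullet,E_R^\bullet,\tau_1^\bullet,\tau_2^\bullet)$ is feasible, and therefore optimal, for (P3.2), which is exactly the first branch of (\ref{eq:optimal_solution_P3.2_E})--(\ref{eq:optimal_solution_P3.2_tau}).

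It remains to treat the case $\tau_1^\bullet>\mu$. Here I would first reduce (P3.2) to a one-dimensional problem. For fixed $\tau_1$, the objective ${{\mathcal T}_{E-C}^\prime}$ in (\ref{eq:throughput_E-C_rewritten}) is strictly increasing in $E_A$, in $E_R$, and in $\tau_2$; therefore at any optimum $\tau_1+\tau_2=1$ and $E_X=\min\{\tau_1 P_X^{\max},\,P_X^{\rm{avg}}\}$. By (\ref{eq:def_mu}) this minimum equals $\tau_1 P_X^{\max}$ when $\tau_1\le\mu$ and $P_X^{\rm{avg}}=\mu P_X^{\max}$ when $\tau_1\ge\mu$, simultaneously for $X=A$ and $X=R$ --- it is the common ratio in (\ref{eq:def_mu}) that prevents a ``mixed'' regime in which one of $E_A,E_R$ is capped by its peak bound and the other by its average bound. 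Substituting, ${{\mathcal T}_{E-C}^\prime}$ collapses to a continuous scalar function $\phi(\tau_1)$ on $[0,1]$ that coincides with $h(\tau_1):=(1-\tau_1)\log_2(1+A\tau_1/(1-\tau_1))$, the objective of (P3.2.2) along $\tau_2=1-\tau_1$, on $[0,\mu]$, and with $(1-\tau_1)\log_2(1+A\mu/(1-\tau_1))$ on $[\mu,1]$, where $A=\eta(P_A^{\max}h_{AS}+P_R^{\max}h_{RS})h_{SA}/N_0$ is the quantity appearing in (\ref{eq:optimal_solution_P3.2.1}).

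Next I would analyse $\phi$ on each piece. On $[\mu,1]$ the numerator inside the logarithm is the constant $A\mu$, and $t\mapsto t\log_2(1+c/t)$ is strictly increasing for every $c>0$ (since $\ln(1+u)>u/(1+u)$ for $u>0$), so $\phi$ is strictly decreasing on $[\mu,1]$ and attains its maximum there at $\tau_1=\mu$. On $[0,\mu]$ we have $\phi=h$, and from the argument underlying (\ref{eq:optimal_solution_P3.2.1}) --- i.e., the analysis of the problem addressed in \cite{Ju_TWC_2014} --- $h$ is strictly concave with unique maximizer $\tau_1^\bullet$; since $\tau_1^\bullet>\mu$, $h$ is strictly increasing on $[0,\mu]$, so its maximum over $[0,\mu]$ is also attained at $\tau_1=\mu$. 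By continuity of $\phi$ at $\tau_1=\mu$, the global maximizer of $\phi$ over $[0,1]$ is $\tau_1^*=\mu$, whence $\tau_2^*=1-\mu$ and $E_X^*=P_X^{\rm{avg}}$, which is the second branch of (\ref{eq:optimal_solution_P3.2_E})--(\ref{eq:optimal_solution_P3.2_tau}). Re-running the same scalar reduction when $\tau_1^\bullet\le\mu$ returns $\tau_1^*=\tau_1^\bullet$, consistent with the first part.

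I expect the main obstacle to be the piecewise analysis of $\phi$ in the case $\tau_1^\bullet>\mu$: it requires combining the elementary monotonicity of $t\log_2(1+c/t)$ with the unimodality and concavity of $h$ borrowed from \cite{Ju_TWC_2014}, together with careful bookkeeping at the breakpoint $\tau_1=\mu$ (continuity of $\phi$ there, and the fact that both one-sided restrictions of $\phi$ are maximized at $\mu$). Verifying that assumption (\ref{eq:def_mu}) is exactly what excludes a mixed constraint regime, while routine, is the step most easily overlooked.
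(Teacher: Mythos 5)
Your proof is correct and follows essentially the same route as the paper: treat (P3.2.1) as a relaxation of (P3.2) so that its optimizer settles the case $\tau_1^\bullet\le\mu$, and for $\tau_1^\bullet>\mu$ pin the average-power constraints at equality and push $\tau_1$ down to $\mu$ using the monotonicity of $t\mapsto t\log_2(1+c/t)$. Your version is in fact somewhat more complete than the paper's, which asserts $E_X^*=P_X^{\rm{avg}}$ ``regardless of $\tau_1$'' without explicitly ruling out an optimum at some $\tau_1<\mu$ (where that equality is infeasible); your piecewise analysis of $\phi$ on $[0,\mu]$ via the unimodality of $h$ borrowed from \cite{Ju_TWC_2014}, together with the observation that (\ref{eq:def_mu}) excludes a mixed peak/average regime, closes that gap cleanly.
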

\begin{proof}
Firstly, it is easy to verify that if $\tau_1^\bullet \le \mu$, the optimal solution in (\ref{eq:optimal_solution_P3.2.1}) can also achieve the maximum of the problem (P3.2) without violating the conditions in (\ref{eq:inequality_condition_E_A_E_R}).

For the case when $\tau_1^\bullet > \mu$, however, the optimal solution in (\ref{eq:optimal_solution_P3.2.1}) violates the conditions in (\ref{eq:inequality_condition_E_A_E_R}). In this case, the optimal $E_A$ and $E_R$ should satisfy that $E_X^* = P_X^{\rm{avg}},\; X\in{A,R}$ regardless the value of $\tau_1$. Moreover, it can be shown that the condition $\tau_1^* + \tau_2^* =1$ should be met by the optimal $\tau_1^*$ as well as $\tau_2^*$, and the objective function of problem (P3.2) is monotonically increasing with $\tau_2$. Thus, the value of $\tau_1$ should be as small as possible. Thus, $\tau_1^* = \frac{E_X^*}{P_X^{\max}} = \mu$ and $\tau_1^* = 1 - \tau_2^*$. This completes the proof.
\end{proof}

Then, we can find the optimal values of $P_A$ and $R_R$ for the original problem (P3.1) by performing $P_X^* = E_X^*/\tau_1^*$.

\begin{remark}
It is interesting to notice that the optimal $P_X^* = P_X^{\max}$ for any value of~$\tau_1^*$. In other words, the AP and relay in the E-C protocol should always transmit with the peak power regardless the value of the optimal time allocation.
\end{remark}

\subsection{Throughput Maximization for D-C Protocol}
In D-C protocol, the relay power needs to be split into two fractions that are respectively used for DL energy transfer and UL information forwarding. Analogous to the previous subsection, we can formulate the following throughput maximization problem in terms of power allocation and time allocation for the D-C protocol:
\begin{equation}\label{eq:opt_D-C}
\left({\rm{P3.3}}\right):\begin{array}{l}
 \mathop {\max }\limits_{{P_A},{P_{R}^D},{P_{R}^U},\tau_1,\tau_2} {{\mathcal T}_{D - C}} \\
\;\;\;\;{\rm{s}}{\rm{.t}}{\rm{.}}\;(\ref{eq:power_constraint_AP}),\;(\ref{eq:power_constranit_Relay_D-C}),\; \tau_1 + \tau_2 \le 1,\\
\;\;\;\;\;\;\;\;\;\;P_A,P_R^D,P_R^U,\tau_1,\tau_2 \ge 0.
 \end{array}
\end{equation}
where ${{\mathcal T}_{D - C}}$ denotes the throughput of the D-C protocol given~by
\begin{equation}\label{}
{{\mathcal T}_{D - C}} = \frac{{\tau_2 }}{2}{\log _2}\left( {1 + {\gamma _{D - C}}} \right)
\end{equation}
with $\gamma _{D - C}$ defined in (\ref{eq:SNR_e2e_D-C}).

To proceed, we introduce three new variables defined as $E_A = \tau_1 P_A$, $E_R^D = \tau_1 P_R^D$ and $E_R^U = \frac {\tau_2}{2} P_R^U$. Furthermore, it can be shown that ${{\mathcal T}_{D - C}}$ increases with $\tau_1$ for a fixed $\tau_2$ and increases with $\tau_2$ with a fixed $\tau_1$. This means that the optimal values of $\tau_1$ and $\tau_2$ should satisfy $\tau_1 + \tau_2 =1$. Then, we can remove one of the variables and reformulate the problem (P3.3) as
\begin{equation}\label{}
\left({\rm{P3.4}}\right):\begin{array}{l}
 \mathop {\max }\limits_{{E_A},{E_{R}^D},{E_{R}^U},\tau_1} {{\mathcal T}_{D - C}^\prime} \\
\;\;\;\;{\rm{s}}{\rm{.t}}{\rm{.}}\;E_A \le \tau_1 P_A^{\max},\;E_A \le {P_A^{\rm{avg}}}, \\
\;\;\;\;\;\;\;\;\;\;E_R^D \le \tau_1 P_R^{\max},\;E_R^U \le \frac{1-\tau_1}{2} P_R^{\max}, \\
\;\;\;\;\;\;\;\;\;\;E_R^D + E_R^U \le {P_R^{\rm{avg}}}, \\
\;\;\;\;\;\;\;\;\;\;E_A,E_R^D,E_R^U,\tau_1 \ge 0,
 \end{array}
\end{equation}
where
\begin{equation}\label{}
{{\mathcal T}_{D - C}^\prime} = \frac{{1-\tau_1 }}{2}{\log _2}\left( {1 + {\gamma _{D - C}^\prime}} \right)
\end{equation}
with
\begin{equation}\label{}
\begin{split}
\gamma _{D - C}^\prime  = &\frac{{2\eta \left( {{E_A}{h_{AS}} + E_R^D{h_{RS}}} \right){h_{SA}}}}{{\left( {1 - {\tau _1}} \right){N_0}}} +\\
&\frac{{\frac{{2\eta \left( {{E_A}{h_{AS}} + E_R^D{h_{RS}}} \right){h_{SR}}}}{{\left( {1 - {\tau _1}} \right){N_0}}}\frac{{2E_R^U{h_{RA}}}}{{\left( {1 - {\tau _1}} \right){N_0}}}}}{{\frac{{2\eta \left( {{E_A}{h_{AS}} + E_R^D{h_{RS}}} \right){h_{SR}}}}{{\left( {1 - {\tau _1}} \right){N_0}}} + \frac{{2E_R^U{h_{RA}}}}{{\left( {1 - {\tau _1}} \right){N_0}}} + 1}}.
\end{split}
\end{equation}

However, the simplified problem (P3.4) is still hard to address due to the complexity of the objective function. To resolve this, we adopt the following method: we first solve the problem (P3.4) for a given value of $\tau_1$ and then find the optimal $\tau_1$ via numerical method (e.g., one-dimensional exhaustive search). After a careful observation on its structure, the problem (P3.4) can be simplified to the following three problems based on the given value of $\tau_1$:

\textbf{(1) When} $0 \le \tau_1\le 2\mu -1$: Note that this case happens only if $\mu\ge 0.5$. For any $\tau_1 \in \left[0,2\mu-1\right]$, it is evident that the average power constraints $E_A \le {P_A^{\rm{avg}}}$ and $E_R^D + E_R^U \le {P_R^{\rm{avg}}}$ can be removed. Moreover, ${{\mathcal T}_{D - C}^\prime}$ is shown to be monotonically increasing with $E_A$, $E_R^D$ and $E_R^U$, respectively. Thus, the optimal values for $E_A$, $E_R^D$ and $E_R^U$ are given by
\begin{equation}\label{eq:optimal_E's_1}
{E_A^\circ} = {\tau _1}P_A^{\max } ,\; E_R^{D ,\circ} = {\tau _1}P_R^{\max }, \; E_R^{U,\circ} = \frac{{1 - {\tau _1}}}{2}P_R^{\max } .
\end{equation}

\textbf{(2) When} $2\mu -1 < \tau_1 \le \mu$: The constraint $E_A \le {P_A^{\rm{avg}}}$ can still be ignored and the optimal value of $E_A$ is still given by ${E_A^\circ} = {\tau _1}P_A^{\max } $. But, the constraint $E_R^D + E_R^U \le {P_R^{\rm{avg}}}$ should be considered and updated as $E_R^D + E_R^U = {P_R^{\rm{avg}}}$. We define an auxiliary variable $t = E_R^U /E_R^D$ to facilitate the problem solving. Then, we can reformulate the problem (P3.4) with a given $\tau_1$ as
\begin{equation}\label{}
\left({\rm{P3.4.1}}\right):\begin{array}{l}
 \mathop {\max }\limits_{t} {{\gamma}_{D - C}^{\prime\prime}} \\
\;{\rm{s}}{\rm{.t}}{\rm{.}}\;t_L \le t \le t_U,
 \end{array}
\end{equation}
where
\begin{equation}\label{}
\gamma _{D - C}^{\prime \prime } = a + \frac{b}{{t + 1}} + \frac{{\left( {c + \frac{d}{{t + 1}}} \right)\frac{{et}}{{t + 1}}}}{{c + \frac{d}{{t + 1}} + \frac{{et}}{{t + 1}} + 1}},
\end{equation}
\begin{equation}\label{}
{t_L} = \left({{\mu  - {\tau _1}}}\right)/{{{\tau _1}}},
\end{equation}
\begin{equation}\label{}
{t_U} = \left\{ \begin{array}{l}
 \left( {1 - {\tau _1}} \right)/\left( {2\mu  - 1 + {\tau _1}} \right),\;{\rm{if}}\;{\tau _1} > 1 - 2\mu,  \\
 \infty ,\;{\rm{otherwise}}, \\
 \end{array} \right.
\end{equation}
with $ a = \frac{{2\eta E_A^\circ {h_{AS}}{h_{SA}}}}{{\left( {1 - {\tau _1}} \right){N_0}}}$, $b = \frac{{2\eta {P_R^{\rm{avg}}}{h_{RS}}{h_{SA}}}}{{\left( {1 - {\tau _1}} \right){N_0}}}$, $c = \frac{{2\eta E_A^\circ {h_{AS}}{h_{SR}}}}{{\left( {1 - {\tau _1}} \right){N_0}}}$, $d = \frac{{2\eta {P_R^{\rm{avg}}}{h_{RS}}{h_{SR}}}}{{\left( {1 - {\tau _1}} \right){N_0}}}$, and $e = \frac{{2{P_R^{\rm{avg}}}{h_{RA}}}}{{\left( {1 - {\tau _1}} \right){N_0}}}$, which are defined for the notation simplicity.

After some algebraic manipulations, we can obtain the following lemma regarding the optimal solution to the problem (P3.4.1):
\begin{lemma}
The optimal solution to the problem (P3.4.1) can be expressed as
\begin{equation}\label{eq:optimal_solution_P3.4.1}
{t^*} = \arg \max _{{t \in \left\{ {{t_{L}},{t_1},{t_2},{t_U}} \right\}}}\gamma _{D - C}^{\prime \prime },
\end{equation}
where
\begin{subequations}\label{}
\begin{align}
&{t_1} = \left\{ \begin{array}{l}
 {t_1^\prime},\;{\rm{if}}\;\bigtriangleup \ge 0\;{\rm{and}}\;{t_L} \le {t_1^\prime} \le {t_U} ,\\
 \emptyset,\; \;{\rm{otherwise}}, \\
 \end{array} \right.\\
&{t_2} = \left\{ \begin{array}{l}
 {t_2^\prime},\;{\rm{if}}\;\bigtriangleup \ge 0\;{\rm{and}}\;{t_L} \le {t_2^\prime} \le {t_U} ,\\
 \emptyset,\; \;{\rm{otherwise}}, \\
 \end{array} \right.,
\end{align}
in which, $\bigtriangleup = B^2 -4AC$, $t_1^\prime  = \frac{{ - b + \sqrt \Delta  }}{{2A}}$, $t_2^\prime  = \frac{{ - b - \sqrt \Delta  }}{{2A}}$. Here, $A = ce - 2bc - 2be - b - de - bc^2 - be^2 + c^2e - de^2 - 2bce$, $B = 2ce - 4bc - 2bd - 2be - 2b - 2bc^2 + 2c^2e - 2bcd - 2bce - 2bde + 2cde$, $C = ce - 2bc - 2bd - b + de - bc^2 - bd^2 + c^2e + d^2e - 2bcd + 2cde$.
\end{subequations}
\end{lemma}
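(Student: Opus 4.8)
The plan is to treat (P3.4.1) as a one-dimensional maximization of the continuous function $\gamma_{D-C}''$ over the interval $[t_L,t_U]$ and to argue that any maximizer must be either an endpoint or an interior stationary point. First I would observe that for $t\ge t_L\ge 0$ every denominator appearing in $\gamma_{D-C}''$ is strictly positive, since $a,b,c,d,e>0$; hence $\gamma_{D-C}''$ is smooth on $(t_L,t_U)$ and continuous on the closed interval. When $t_U=\infty$ one checks that $\gamma_{D-C}''$ has a finite limit as $t\to\infty$, so the supremum is still captured by including $t_U$ in the candidate set under the usual limiting convention. Consequently the optimizer lies in $\{t_L,t_U\}\cup\{t:(\gamma_{D-C}'')'(t)=0,\ t_L<t<t_U\}$, and (\ref{eq:optimal_solution_P3.4.1}) will follow once the stationary points are pinned down.

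The core of the proof is the stationarity analysis. Substituting $v=t+1$ and clearing the nested fraction, $\gamma_{D-C}''$ becomes a rational function of $t$ whose numerator and denominator are polynomials of small degree. Differentiating via the quotient rule and setting the result to zero is equivalent, after multiplying through by the square of the (strictly positive) denominator, to a single polynomial equation in $t$. The crucial—and most laborious—step is to verify that in this polynomial all terms of degree three and above cancel, leaving a genuine quadratic $At^2+Bt+C=0$ whose coefficients are exactly the $A$, $B$, $C$ given in the lemma. I would establish this by expanding $c+\tfrac{d}{t+1}+\tfrac{et}{t+1}+1$ and the numerator $(c+\tfrac{d}{t+1})\tfrac{et}{t+1}$ in powers of $t$, applying the quotient rule, and collecting coefficients; the cancellations are a structural consequence of the particular form of $\gamma_{SRA}$ in (\ref{eq:exact_gamma_SRA}).

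Given the quadratic $At^2+Bt+C=0$, the two candidate interior stationary points are $t_1'=\frac{-B+\sqrt{\Delta}}{2A}$ and $t_2'=\frac{-B-\sqrt{\Delta}}{2A}$ with $\Delta=B^2-4AC$; they are real only if $\Delta\ge 0$ and admissible only if they fall in $[t_L,t_U]$, which is precisely the screening built into the definitions of $t_1$ and $t_2$. If $A=0$ the first-order condition degenerates to a linear equation whose unique root is treated identically, a boundary case that can be absorbed into the statement. Since the maximizer is thereby confined to the finite set $\{t_L,t_1,t_2,t_U\}$, evaluating $\gamma_{D-C}''$ at each point and taking the argmax yields (\ref{eq:optimal_solution_P3.4.1}). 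I expect the only genuine obstacle to be the degree-counting and coefficient bookkeeping that reduces the first-order condition to the stated quadratic; the remainder is a routine endpoint-versus-critical-point comparison.
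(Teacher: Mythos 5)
Your proposal is correct and follows essentially the same route as the paper: differentiate $\gamma_{D-C}''$, reduce the first-order condition to the quadratic $At^2+Bt+C=0$, and compare the objective at the feasible roots and the two interval endpoints. You are in fact more careful than the paper on the edge cases ($t_U=\infty$, $A=0$) and you correctly write the roots as $\frac{-B\pm\sqrt{\Delta}}{2A}$, where the paper's ``$-b$'' is evidently a typo for $-B$.
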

\begin{proof}
We calculate the first-order derivative of $\gamma _{D - C}^{\prime \prime }$ with respect to $t$ and obtain that
\begin{equation}\label{}
\partial \gamma _{D - C}^{\prime \prime }/\partial t \propto A{t^2} + Bt + C,
\end{equation}
which means that $\gamma _{D - C}^{\prime \prime }$ has up to two extreme points in terms of $t$ without considering the constraint. Thus, the maximizer of problem (P3.4.1) can be easily obtained through evaluating the values of $\gamma _{D - C}^{\prime \prime }$ at feasible extreme points and two limits. Mathematically, we have (\ref{eq:optimal_solution_P3.4.1}), which completes the proof.
\end{proof}

Then, the optimal values for $E_A$, $E_R^D$ and $E_R^U$ are accordingly given by
\begin{equation}\label{eq:optimal_E's_2}
{E_A^\circ} = {\tau _1}P_A^{\max } ,\; E_R^{D ,\circ} = \frac{P_R^{\rm{avg} }}{t^* + 1}, \; E_R^{U,\circ} = \frac{t^* P_R^{\rm{avg} }}{t^* + 1} .
\end{equation}
\textbf{(3) When} $\tau_1 > \mu$: In this scenario, the two average power constraints for the AP and relay are both active and updated as $E_A = {P_A^{\rm{avg}}}$ and $E_R^D + E_R^U = {P_R^{\rm{avg}}}$. However, the two constraints $E_A \le \tau_1 P_A^{\max}$ and $E_R \le \tau_1 P_R^{\max}$ can be ignored. Following the similar analysis as in the proof of Proposition \ref{prop:optimal_solution_3.1}, we can deduce that when the allocation parameter $t$ of the relay is given, the maximum energy harvested by the source is fixed for any $\tau_1$ that is no less than $\mu$. In this case, the time allocated for energy transfer should be as small as possible. Intuitively, we have the following lemma:
\begin{lemma}
For any $\tau_1 \in \left(\mu,1\right]$, the corresponding maximal throughput is less than that of the case when $\tau_1 = \mu$.
\end{lemma}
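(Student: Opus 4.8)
The plan is to follow the template of the proof of Proposition~\ref{prop:optimal_solution_3.1}. First I would freeze the relay split ratio $t = E_R^U/E_R^D$ and regard the throughput of case~(3) as a function of $\tau_1$ alone. Since $\tau_1 > \mu$, both average power constraints are active, so $E_A = {P_A^{\rm{avg}}}$ and $E_R^D + E_R^U = {P_R^{\rm{avg}}}$; together with $E_R^U = tE_R^D$ this pins down $E_R^D = {P_R^{\rm{avg}}}/(1+t)$ and $E_R^U = t{P_R^{\rm{avg}}}/(1+t)$, none of which depends on $\tau_1$. Hence the harvested energy $\eta(E_A h_{AS}+E_R^D h_{RS})$ and the relay's UL energy $E_R^U$ are constants determined by $t$ only, and each of $\gamma_{SA},\gamma_{SR},\gamma_{RA}$ takes the form $\kappa_i/(1-\tau_1)$ with $\kappa_1,\kappa_2,\kappa_3>0$ depending only on $t$. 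Writing $\tau_2 = 1-\tau_1$, the throughput at this fixed $t$ is $h_t(\tau_2) := \frac{\tau_2}{2}\log_2\big(1+\gamma_{D-C}^\prime\big)$ with $\gamma_{D-C}^\prime = \frac{\kappa_1}{\tau_2}+\frac{(\kappa_2/\tau_2)(\kappa_3/\tau_2)}{\kappa_2/\tau_2+\kappa_3/\tau_2+1}$, and the maximal throughput at $\tau_1$, which I denote $\mathcal{T}^*(\tau_1)$, equals $\max_{t}h_t(1-\tau_1)$ over the admissible range of $t$ (using that $E_A={P_A^{\rm{avg}}}$ at the optimum, as already established for case~(3)).

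Second, I would note the feasibility fact that drives the comparison: for $\tau_1\ge\mu$ the lower limit $t_L=(\mu-\tau_1)/\tau_1$ is non-positive hence inactive, while the finite branch of the upper limit $t_U(\tau_1)=(1-\tau_1)/(2\mu-1+\tau_1)$ is strictly decreasing in $\tau_1$ (its derivative is $-2\mu/(2\mu-1+\tau_1)^2<0$). So the admissible $t$-interval only shrinks as $\tau_1$ grows past $\mu$; in particular the maximizer $t^\star$ attaining $\mathcal{T}^*(\tau_1)$ is still feasible at $\tau_1=\mu$, so $\mathcal{T}^*(\mu)\ge h_{t^\star}(1-\mu)$ whereas $\mathcal{T}^*(\tau_1)=h_{t^\star}(1-\tau_1)$. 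Since $1-\tau_1<1-\mu$, the lemma reduces to the single remaining claim that for every fixed admissible $t$ the map $\tau_2\mapsto h_t(\tau_2)$ is strictly increasing on $(0,1)$.

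That monotonicity claim is the analogue of the ``objective increasing in $\tau_2$'' step invoked in Proposition~\ref{prop:optimal_solution_3.1}, and I expect it to be the main obstacle, since the dual-hop term $\gamma_{SRA}$ destroys the clean behaviour of the E-C throughput. I would attack it by differentiating $h_t$. A direct computation yields the identity $-\tau_2\, d\gamma_{D-C}^\prime/d\tau_2 = \gamma_{D-C}^\prime + \kappa_2\kappa_3/(\kappa_2+\kappa_3+\tau_2)^2 = \gamma_{D-C}^\prime + \gamma_{SRA}/(\gamma_{SR}+\gamma_{RA}+1)$, and substituting it into $h_t^\prime(\tau_2)>0$ reduces the claim to the scalar inequality
\[
(1+\gamma_{D-C}^\prime)\ln\!\big(1+\gamma_{D-C}^\prime\big) - \gamma_{D-C}^\prime \;>\; \frac{\gamma_{SRA}}{\gamma_{SR}+\gamma_{RA}+1},
\]
whose left-hand side is precisely the convex function $z\ln z - z + 1$ at $z = 1+\gamma_{D-C}^\prime\ge 1$ that already appeared in the E-C analysis. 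I would close the inequality using the elementary bound $z\ln z - z + 1\ge (z-1)^2/(z+1)$ for $z\ge 1$ together with $\gamma_{SRA}\le\gamma_{D-C}^\prime$ and $\gamma_{SRA}=\gamma_{SR}\gamma_{RA}/(\gamma_{SR}+\gamma_{RA}+1)$; making this last estimate hold uniformly over all admissible $\kappa_1,\kappa_2,\kappa_3$ — particularly when the direct-link contribution $\kappa_1$ is small — is the delicate point and is where most of the effort goes. With the monotonicity in hand, the chain from the previous paragraph gives $\mathcal{T}^*(\tau_1)=h_{t^\star}(1-\tau_1)<h_{t^\star}(1-\mu)\le\mathcal{T}^*(\mu)$, which is the lemma.
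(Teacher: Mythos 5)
Your reduction is correct and well executed up to the final step: in case (3) the energies $E_A=P_A^{\rm avg}$, $E_R^D=P_R^{\rm avg}/(1+t)$, $E_R^U=tP_R^{\rm avg}/(1+t)$ depend only on $t$; the bound $t_L\le 0$ is inactive for $\tau_1\ge\mu$ while $t_U$ is decreasing, so the feasible $t$-interval only shrinks; and the identity $-\tau_2\,d\gamma_{D-C}^\prime/d\tau_2=\gamma_{D-C}^\prime+\gamma_{SR}\gamma_{RA}/(\gamma_{SR}+\gamma_{RA}+1)^2$ checks out. The gap is exactly where you flag it, and it cannot be closed, because the monotonicity of $h_t(\tau_2)$ is false. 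At low SNR the left side $(1+\gamma_{D-C}^\prime)\ln(1+\gamma_{D-C}^\prime)-\gamma_{D-C}^\prime\approx(\gamma_{D-C}^\prime)^2/2$, while the right side $\gamma_{SR}\gamma_{RA}/(\gamma_{SR}+\gamma_{RA}+1)^2\approx\gamma_{SR}\gamma_{RA}$; with a weak direct link ($\kappa_1\approx 0$) one has $\gamma_{D-C}^\prime\approx\gamma_{SR}\gamma_{RA}$, so the required inequality degenerates to $(\gamma_{SR}\gamma_{RA})^2/2>\gamma_{SR}\gamma_{RA}$, which fails whenever $\gamma_{SR}\gamma_{RA}<2$. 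Concretely, $\kappa_1=0$, $\kappa_2=\kappa_3=0.1$, $\tau_2=0.5$ gives a left side of about $4\times10^{-4}$ against a right side of about $2\times10^{-2}$, so $h_t^\prime(\tau_2)<0$ there. The structural reason is that although the harvested energy is capped for $\tau_1\ge\mu$, the two-hop SNR behaves like $\gamma_{SR}\gamma_{RA}\propto 1/\tau_2^{2}$ at low SNR, so $\tau_2\log_2(1+\gamma_{SRA})$ grows as $\tau_2$ shrinks.

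For comparison: the paper does not actually prove this lemma; it asserts it with the word ``intuitively,'' the intuition being precisely the monotonicity you tried to establish (harvested energy fixed, hence make the energy-transfer time as small as possible). Your calculation therefore does something more valuable than filling a gap --- it exposes that the intuition is not uniformly valid. Since your feasibility argument shows the optimal $t^\star$ at $\tau_1=\mu$ remains feasible for $\tau_1$ slightly above $\mu$, and $h_{t^\star}$ can be locally decreasing in $\tau_2$ there, the statement of the lemma itself appears to fail in the low-SNR, weak-direct-link regime. The honest fix is not a sharper elementary bound but an added hypothesis: the right side is always at most $1/4$, so a sufficient condition is $(1+\gamma_{D-C}^\prime)\ln(1+\gamma_{D-C}^\prime)-\gamma_{D-C}^\prime\ge 1/4$ (roughly $\gamma_{D-C}^\prime\gtrsim 0.8$), or more simply a dominant direct link, under which your chain $\mathcal{T}^*(\tau_1)=h_{t^\star}(1-\tau_1)<h_{t^\star}(1-\mu)\le\mathcal{T}^*(\mu)$ goes through.
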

Note that the above lemma reveals that the interval $\left(\mu,1\right]$ is not needed to consider when we calculate the optimal value of $\tau_1$.

By combining the three cases analyzed above, we can obtain the optimal solution to the original problem (P3.4) given in the following proposition:
\begin{proposition}
The optimal value for $\tau_1$ of the problem (P3.4) can be expressed as
\begin{equation}\label{}
\tau _1^* = \arg {\max _{{\tau _1} \in \left[ {0,\mu } \right]}}{\cal T}_{D - C}^\prime \left( {{E_A^\circ},E_R^{D ,\circ},E_R^{U ,\circ}} \right),
\end{equation}
where ${E_A^\circ}$, $E_R^{D ,\circ}$, and $E_R^{U ,\circ}$ are given in (\ref{eq:optimal_E's_1}) or (\ref{eq:optimal_E's_2}) based on the value of $\tau_1$. Accordingly, the optimal values for other parameters can be calculated via $P_A^* = \frac{{E_A^\circ \left( {\tau _1^*} \right)}}{{{\tau _1}}}$, $P_R^{D,*} = \frac{{E_R^{D,^\circ }\left( {\tau _1^*} \right)}}{{{\tau _1}}}$, $\tau _2^* = 1 - \tau _1^*$, and $P_R^{U,*} = \frac{{2E_R^{U,^\circ }\left( {\tau _1^*} \right)}}{{{\tau _2}}}$.
\end{proposition}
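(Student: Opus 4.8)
The plan is to reduce (P3.4) to a one-dimensional optimization in $\tau_1$ by solving the inner allocation problem over $(E_A,E_R^D,E_R^U)$ in closed form for each fixed $\tau_1$, and then to glue together the three $\tau_1$-regimes already identified. The starting point is the observation preceding (P3.4): for any fixed $(E_A,E_R^D,E_R^U)$ the objective is increasing in $\tau_1$ and in $\tau_2$ separately, so an optimal point satisfies $\tau_1+\tau_2=1$; after substituting $\tau_2=1-\tau_1$ we are left with (P3.4), which I treat as an outer problem in $\tau_1\in[0,1]$ with an inner problem in the energy variables.

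For the inner problem I would first record the relevant monotonicities: $\gamma_{D-C}^\prime$ is increasing in each of $E_A$, $E_R^D$, $E_R^U$, since $E_A,E_R^D$ enter $\gamma_{SA}$ and $\gamma_{SR}$ with positive coefficients, $E_R^U$ enters $\gamma_{RA}$ with a positive coefficient, and $\gamma_{SRA}=\gamma_{SR}\gamma_{RA}/(\gamma_{SR}+\gamma_{RA}+1)$ is increasing in $\gamma_{SR}$ and in $\gamma_{RA}$; hence the only genuine trade-off is between $E_R^D$ and $E_R^U$ through the relay average constraint. I then split $[0,1]$ exactly as in the text. Case (a), $0\le\tau_1\le 2\mu-1$ (nonempty only if $\mu\ge 1/2$): both average constraints are implied by the peak constraints, so the inner optimum is the corner point (\ref{eq:optimal_E's_1}). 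Case (b), $2\mu-1<\tau_1\le\mu$: the AP average constraint stays slack, giving $E_A=\tau_1 P_A^{\max}$, while the relay average constraint binds, $E_R^D+E_R^U=P_R^{\rm avg}$; parametrising the relay split by $t=E_R^U/E_R^D$ turns the inner problem into (P3.4.1) with feasible interval $[t_L,t_U]$ (the two relay peak constraints), solved by Lemma~1, yielding (\ref{eq:optimal_E's_2}). Case (c), $\tau_1>\mu$: both average constraints bind, so for a fixed relay split the harvested energy is determined independently of $\tau_1$ while the pre-log factor $(1-\tau_1)/2$ strictly shrinks; arguing as in the proof of Proposition~\ref{prop:optimal_solution_3.1}, the whole interval $(\mu,1]$ is dominated by $\tau_1=\mu$, which is precisely Lemma~2.

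Combining (a)--(c), the outer search may be confined to $\tau_1\in[0,\mu]$, which (when $\mu\ge1/2$) decomposes as $[0,2\mu-1]\cup(2\mu-1,\mu]$ and (when $\mu<1/2$) is entirely of type (b). On this interval the inner-optimal energies $(E_A^\circ,E_R^{D,\circ},E_R^{U,\circ})$ are the explicit functions of $\tau_1$ given by (\ref{eq:optimal_E's_1}) on $[0,2\mu-1]$ and by (\ref{eq:optimal_E's_2}) on $(2\mu-1,\mu]$; substituting them into ${\cal T}_{D-C}^\prime$ produces a single-variable function of $\tau_1$ whose maximizer is $\tau_1^*$, located by a one-dimensional search. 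Finally $P_A^*,P_R^{D,*},P_R^{U,*},\tau_2^*$ are recovered from the substitution rules $E_A=\tau_1P_A$, $E_R^D=\tau_1P_R^D$, $E_R^U=\tfrac{\tau_2}{2}P_R^U$ and $\tau_2=1-\tau_1$, which is the content of the displayed formulas.

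The main obstacle is Lemma~2, i.e.\ case (c): one must show that for every $\tau_1\in(\mu,1]$ the \emph{entire} set of achievable end-to-end SNRs is no larger than at $\tau_1=\mu$ --- the harvested term $2\eta(E_A h_{AS}+E_R^D h_{RS})$ is capped through $P_A^{\rm avg}$, $P_R^{\rm avg}$ and the split $t$ regardless of $\tau_1$, so enlarging $\tau_1$ only degrades the $(1-\tau_1)/2$ factor and cannot be compensated by any reallocation of the relay's two tasks. A secondary point that needs care is exhaustiveness and continuity: I would check that the partition (a)--(c) covers $[0,1]$ and that the two expressions (\ref{eq:optimal_E's_1}) and (\ref{eq:optimal_E's_2}) match at $\tau_1=2\mu-1$ (there $t_L=0$ and $P_R^{\rm avg}=\tau_1P_R^{\max}+\tfrac{1-\tau_1}{2}P_R^{\max}$), and that even though the argmax $t^*$ in Lemma~1 may jump, the value $\max_t\gamma_{D-C}^{\prime\prime}$ varies continuously with $\tau_1$; this makes the reduced objective continuous on the compact set $[0,\mu]$ and legitimizes the one-dimensional search.
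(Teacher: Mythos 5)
Your proposal follows essentially the same route as the paper: the same reduction to a one-dimensional outer search over $\tau_1$ via $\tau_1+\tau_2=1$, the same three-regime case split at $2\mu-1$ and $\mu$ with the inner energy allocation solved by monotonicity (case a), by the $t$-parametrisation and Lemma~1 (case b), and by Lemma~2 to discard $(\mu,1]$ (case c). Your added checks on the consistency of (\ref{eq:optimal_E's_1}) and (\ref{eq:optimal_E's_2}) at $\tau_1=2\mu-1$ and on the continuity of the reduced objective are sensible refinements the paper leaves implicit, but they do not change the argument.
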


\begin{remark}
It is worth noting that although the closed-form optimal solution to the problem (P3.4) with five variables is not given, this problem can be efficiently solved via one-dimensional exhaustive search in the proposed method. Moreover, our analyses reduce the interval of the exhaustive search.
\end{remark}

\section{Numerical Results}
In this section, we present some numerical results to illustrate and compare the performance of the proposed protocols. To obtain meaningful results, we restrict our attention to a linear topology. Specifically, the relay is located on a straight line between the AP and source, i.e, $d_{AR}=  d_{AS} - d_{SR}$ with $d_{XY}$ denoting the distance between nodes $X$ and $Y$. The channel short-term fading is assumed to be Rayleigh distributed. To capture the effect of path-loss on the network performance, we use the channel model that ${\mathbb{E}}\left\{h_{XY}\right\} = 10^{-3}\left({d_{XY}}\right)^{ - \alpha }$, where $\alpha \in [2,5]$ is the path-loss factor \cite{Chen_SPL_2010}. Note that a 30dB average signal power attenuation is assumed at a reference distance of $1$m in the above channel model \cite{Ju_TWC_2014}. In all following simulations, we set equal average transmit power for the AP and relay, the distance between the AP and source $d_{AS} = 10$m, the path-loss exponent $\alpha$ = 2, the noise power $N_0 = -80$dBm, and the energy harvesting efficiency $\eta = 0.5$. Moreover, each curve for the average throughput is obtained by averaging over $5000$ randomly generated channel realizations.
\begin{figure}
\centering \scalebox{0.45}{\includegraphics{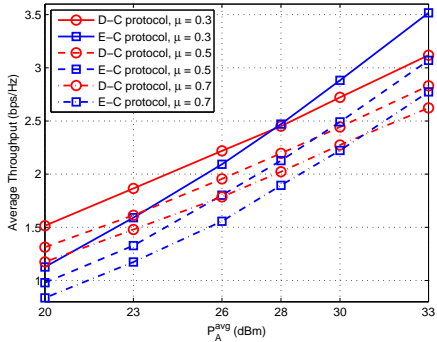}}
\caption{The average throughput of the proposed protocols versus the average transmit power of the AP (i.e., $P_A^{\rm{avg}}$), where $d_{SR} = 5$m and $P_R^{\rm{avg}} = P_A^{\rm{avg}}$. \label{fig:Throughput_SNR}}
\end{figure}

Fig. \ref{fig:Throughput_SNR} plots the average throughput curves of the E-C and D-C protocols versus the average transmit power of the AP with different values of $\mu$, where the relay is located in the middle of the AP and source. We can see that the performance of both protocols increases monotonically with the average transmit power of the AP for any value of $\mu$. For both E-C and D-C protocols, we can observe that the average throughput decreases as the parameter $\mu$ increases. This is because that for a given average transmit power, the peak transmit power decreases when $\mu$ increases, which reduces the feasible sets of the transmit powers and thus degrades the throughput performance. It can also be observed from Fig. \ref{fig:Throughput_SNR} that the D-C protocol is superior to the E-C protocol when the average transmit power is relatively small to medium. But this observation is reversed when the average transmit power is high enough. This is understandable since the throughput is highly affected by the information transmission time at high SNR and the time utilization of E-C protocol is better than that of the D-C protocol. Furthermore, higher average transmit power is needed for the E-C protocol to outperform the D-C protocol when the value of $\mu$ grows.

\begin{figure}
\centering \scalebox{0.45}{\includegraphics{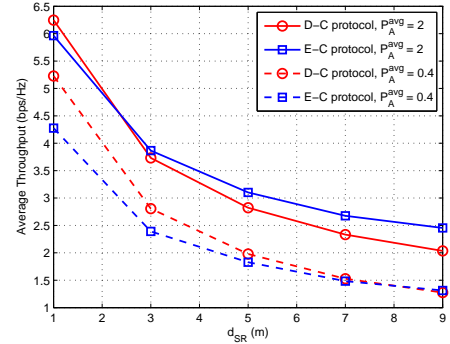}}
\caption{The average throughput of the proposed protocols versus $d_{SR}$, where $\mu = 0.5$ and $P_R^{\rm{avg}} = P_A^{\rm{avg}}$. \label{fig:Throughput_d_SR}}
\end{figure}
Fig. \ref{fig:Throughput_d_SR} depicts the impact of the relay position on the average throughput of the proposed protocols, in which the throughput curves are plotted versus the distance between the source and relay (i.e., $d_{SR}$) with two different values of $P_A^{\rm{avg}}$. From Fig. \ref{fig:Throughput_d_SR}, we can observe that the average throughput of both protocols decreases smoothly with the increasing of $d_{SR}$. This observation indicates that the hybrid relay should be deployed nearer to the source to obtain better throughput. Besides, it is observed from Fig. \ref{fig:Throughput_d_SR} that when the average transmit power is equal to $2$Watt (i.e., at high SNR), the E-C protocol is superior to the D-C protocol unless the relay is very close to the source. In contrast, in lower SNR regime (i.e., $P_A^{\rm{avg}}$ is $0.4$Watt), the D-C protocol outperforms the E-C protocol until the relay is very far away from the source.

\section{Conclusions}
In this paper, two cooperative protocols, energy cooperation (E-C) and dual cooperation (D-C), were developed for a wireless-powered cooperative communication network consisting of a hybrid AP, a hybrid relay and an energy harvesting source. The throughput maximization problems in terms of joint power and time allocation were formulated and resolved for the proposed protocols. Numerical results showed that the (E-C) protocol achieves better throughput at high signal-to-noise ratios (SNRs), especially when the distance between the source and relay is large. In contrast, when the SNR is not high and the relay is relatively close to the source, the D-C protocol is superior to the E-C protocol.


\ifCLASSOPTIONcaptionsoff
  \newpage
\fi

\bibliographystyle{IEEEtran}
\bibliography{References}

%

\end{document}